\newtheorem{Theorem}{Theorem}
\newtheorem{Lemma}{Lemma}
\newtheorem{Definition}{Definition}
\newenvironment{Proof}[1]{\medskip\par\noindent
{\bf Proof:\,}\,#1}{{\mbox{\,$\blacksquare$}\par}}
\newenvironment{proof}[1]{\medskip\par\noindent
{\bf Proof:\,}\,#1}{{\mbox{\,$\blacksquare$}\par}}
\newcommand{\Eb}{\mathbb{E}}
\newcommand{\Pb}{\mathbb{P}}
\newcommand{\yjc}[1]{{\color{magenta}(JY: #1)}}
\newcommand{\yjc}[1]{}
\newcommand{\yj}[1]{{\color{red}(JY: #1)}}
\newcommand{\yj}[1]{}
\newcommand{\fst}[1]{{\color{blue}#1}}
\newcommand{\fst}[1]{}
\title{Optimal Status Updating for an Energy Harvesting Sensor with a Noisy Channel}
\author{
 \IEEEauthorblockN{Songtao Feng}
    \IEEEauthorblockA{Department of Electrical Engineering \\
The Pennsylvania State University\\
 University Park, PA 16802\\
    \emph{sxf302@psu.edu}}
    \and

   \IEEEauthorblockN{Jing Yang}
    \IEEEauthorblockA{Department of Electrical Engineering \\
The Pennsylvania State University\\
 University Park, PA 16802\\
    \emph{yangjing@psu.edu}}

\thanks{This work was supported in part by the National Science Foundation (NSF) under Grant ECCS-1650299.}
}
\begin{document}
\IEEEoverridecommandlockouts
\maketitle
\thispagestyle{empty}

\begin{abstract}
Consider an energy harvesting sensor continuously monitors a system and sends time-stamped status update to a destination. The destination keeps track of the system status through the received updates. Under the energy causality constraint at the sensor, our objective is to design an optimal {\it online} status updating policy to minimize the long-term average Age of Information (AoI) at the destination. We focus on the scenario where the the channel between the source and the destination is noisy, and each transmitted update may fail independently with a constant probability. We assume there is no channel state information or transmission feedback available to the sensor. We prove that within a broadly defined class of online policies, the best-effort uniform updating policy, which was shown to be optimal when the channel is perfect, is still optimal in the presence of update failures. Our proof relies on tools from Martingale processes, and the construction of a sequence of {\it virtual} policies.

\end{abstract}
\begin{IEEEkeywords}
 Age of information, energy harvesting, online policy, status updating, noisy channel.
  \end{IEEEkeywords}


\section{Introduction}


Energy harvesting (EH) sensor networks, composed of devices that are powered by energy harvested from ambient environment, are becoming the future of energy self-sustaining wireless networks, with the goal of having extended lifetime and being deployed in challenging conditions or locations. To cope with the intermittent, random and scarce nature of the harvested energy in such networks, various energy management policies have been studied in the past years under different performance criterion~\cite{Yang_tcom,ozel_finite_tcom,HoZ12,kaya_tcom}.

Meanwhile, a metric called ``Age of Information'' (AoI) has been introduced to measure the timeliness of the status information in a network recently~\cite{infocom/KaulYG12}. Specifically, at time $t$, the AoI in the system is defined as $t-U(t)$, where $U(t)$ is the time stamp of the latest received update packet at the destination. AoI has shown to be fundamentally different from standard performance metrics, such as throughput, delay, or distortion. Modeling the status updating system as a queueing system, the time average AoI has been analyzed in systems with a single server~\cite{infocom/KaulYG12,ciss/KaulYG12,isit/YatesK12,YatesK16,Pappas:2015:ICC,isit/NajmN16,isit/KamKNWE16,isit/ChenH16}, and multiple servers \cite{isit/KamKE13, isit/KamKE14,tit/KamKNE16}. A related metric, Peak Age of Information (PAoI), has been introduced and studied in \cite{isit/CostaCE14,tit/CostaCE16,isit/HuangM15}. The optimality properties of a preemptive Last Generated First Served service discipline are identified in \cite{isit/BedewySS16}. AoI optimization has been studied in~\cite{infocom/SunUYKS16}.
The relationship between AoI and the MMSE in remote estimation of a Wiener process is investigated in \cite{Sun:ISIT:2017}.

A few recent works start to investigate AoI-minimal status updating policies under an energy harvesting setting~\cite{isit/Yates15,ita/BacinogluCU15,Yang:2017:AoI,BacinogluU17,Ahmed:2018:ICC,Ahmed:2018:ITA,Uysal:2018:EH,Ahmed:2017:Asilomar,Ahmed:2017:Globecom,Baknina:2018:CISS,Baknina:2018:ISIT}. It has been shown in \cite{isit/Yates15} that a {\it lazy} updating policy that introduces inter-update delays outperforms a greedy policy that submits a fresh update as the system becomes available. AoI minimization in an EH system under different assumptions on the battery size has been investigated in \cite{ita/BacinogluCU15,Yang:2017:AoI,BacinogluU17,Ahmed:2018:ICC,Ahmed:2018:ITA,Uysal:2018:EH}. Specifically, for the infinite battery case, \cite{Yang:2017:AoI} shows that the 
{\it best-effort uniform} (BU) updating policy, which updates at a constant rate when the source has sufficient energy, is optimal when the channel between source and destination is perfect. For finite battery sizes, several battery level dependent threshold policies have been shown to be optimal in \cite{ita/BacinogluCU15,Yang:2017:AoI,BacinogluU17,Ahmed:2018:ICC,Ahmed:2018:ITA,Uysal:2018:EH}. Offline policies to minimize AoI in EH channels have been studied in \cite{Ahmed:2017:Asilomar,Ahmed:2017:Globecom}. Average AoI with different channel coding schemes for EH channels has been analyzed in \cite{Baknina:2018:CISS,Baknina:2018:ISIT}.  

In this paper, we extend our previous work \cite{Yang:2017:AoI} by assuming a noisy channel between the source and the destination, and each update will fail with a constant probability, independent with any other factors in the system. Besides, we assume there is no channel state information (CSI) or transmission feedback available to the transmitter. We focus on the infinite battery case, and aim to develop online status updating policy for the source to minimize the long-term average AoI at the destination even with the noisy channel. We first obtain a lower bound on the long-term AoI for a broadly defined class of online policies, and then show that the BU updating policy can actually achieve the lower bound, thus is still optimal. To overcome the difficulty of characterizing the complicated AoI evolution due to update failure and battery outage, we construct a sequence of virtual policies named as {\it best-effort updating with energy removal} (BU-ER). Under BU-ER, we are able to decouple the impacts of battery outage and update failure, and explicitly show that the expected long-term average AoI under such policies approaches the lower bound. Since the BU-ER policies are sub-optimal to the BU updating, the optimality of BU updating can thus be proved. We also evaluate the performances of the proposed policies through simulations.


\section{System Model and Problem Formulation} \label{sec:model}
Consider a scenario where an energy harvesting sensor continuously monitors a system and sends time-stamped status updates to a destination. The destination keeps track of the system status through the received updates. We use the metric Age of Information (AoI) to measure the ``freshness" of the status information available at the destination.

Similarly to~\cite{Yang:2017:AoI,BacinogluU17,Ahmed:2018:ICC,Ahmed:2018:ITA,Uysal:2018:EH,Ahmed:2017:Asilomar,Ahmed:2017:Globecom,Baknina:2018:CISS,Baknina:2018:ISIT}, we assume the time used to collect and transmit a status update is negligible compared with the time scale of inter-update delays. Therefore, given sufficient energy available at the source, a status update can be generated and transmitted to the destination instantly. Intuitively, a status update should be transmitted once it is generated to avoid unnecessary queueing delay. We assume the channel between the source and the destination is noisy, thus each update transmitted by the source may be corrupted and unrecognizable at the destination. Specifically, we assume that with probability $p$, $0<p\leq 1$, an update will be successfully delivered to the destination, irrespective of other factors in the system. As shown in Fig. \ref{fig:AoI}, only after an update is successful received the AoI in the system will be updated.
We assume there is no CSI or transmission feedback available to the source. Therefore, the source does not have precise knowledge of the instantaneous AoI in the system.

We assume that the energy unit is normalized so that each status update requires one unit of energy. This energy unit represents the cost of both measuring and transmitting a status update. Assume energy arrives at the sensor according to a Poisson process with parameter $\lambda$. Hence, energy arrivals occur at discrete time instants $t_1,t_2,\ldots$. Without loss of generality, we assume $\lambda=1$ for ease of exposition. The sensor is equipped with a battery to store the harvested energy. In this paper, we focus on the case when battery size is infinite.

A status update policy is denoted as $\pi:=\{l_n\}_{n=1}^\infty$, where $l_n$ is the $n$th updating epoch at the {\it source}.
Define $A_n$ as the total amount of energy harvested in $[l_{n-1},l_n)$, and $E(l^-_n)$ as the energy level of the sensor right before the scheduled updating epoch $l_n$.
Assume $l_0=0$,
\begin{align}
E(l_0^-)=E_0, \mbox{ where } E_0\geq 1.
\label{eqn:energy_initial}
\end{align}
Then, under any feasible status update policy, the energy queue evolves as follows
\begin{align}
E(l^-_{n})&=E(l^-_{n-1})-1+A_n, \label{eqn:energy_queue}\\
E(l_n^-)&\geq 1, \label{eqn:energy_constraint}
\end{align}
for $n=1,2,\ldots$. Equation (\ref{eqn:energy_constraint}) corresponds to the energy causality constraint in the system.
Based on the Poisson arrival process assumption, $A_n$ is an independent Poisson random variable with parameter $l_{n}-l_{n-1}$. 

Due to channel fading, only a subset of the updates will be successfully delivered. Thus, the actual status updating epochs at the {\it destination} are different from $\{l_n\}_{n=1}^\infty$ in general. We use $S_n$ to denote the $n$th actual update epoch at the {\it destination}.   
We assume $S_0=l_0=0$, i.e., the system successfully updates its status information right before time zero. 

Denote the inter-update delays as $X_n:= S_n-S_{n-1}$, for $n=1,2,\ldots$. Then, we have $S_n=\sum_{i=1}^nX_i$.
We use $M(T)$ and $N(T)$ to denote the number of transmitted status updates and successfully delivered status updates over $(0,T]$, respectively. Define $R(T)$ as the accumulated age of information experienced by the system over $[0,T]$. Then,
\begin{align}\label{defn:R(T)}
R(T)&=\frac{\sum_{i=1}^{N(T)}X_i^2+ (T-S_{N(T)})^2}{2},
\end{align}
and the time average AoI over the duration $[0,T]$ can be expressed as $R(T)/T$.

\begin{figure}[t]
\centering
\includegraphics[scale=0.65]{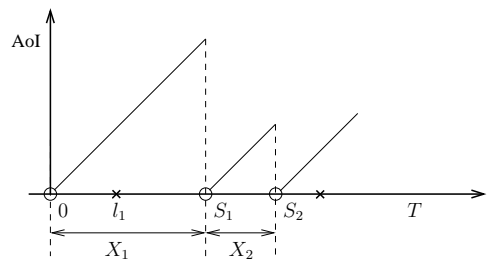}
\caption{AoI as a function of $T$. Circles represent successful status updates and crosses represent failed status updates.} \label{fig:AoI}
\vspace{-0.1in}
\end{figure}

Our objective is to determine the sequence of update epochs $l_1,l_2,\ldots$ at the {\it source}, so that the time average AoI at the {\it destination} is minimized, subject to the energy causality constraint. We focus on a set of {\it online} policies $\Pi$ in which the information available for determining the updating epoch $l_n$ includes the updating history $\{l_i\}_{i=0}^{n-1}$, the energy arrival profile over $[0,l_n)$, as well as the energy harvesting statistics (i.e., $\lambda$ in this scenario).
The optimization problem can be formulated as
\begin{eqnarray}\label{eqn:opt}
\underset{\pi\in\Pi}{\min} & &\limsup_{T\rightarrow \infty} \Eb\left[\frac{R(T)}{T}\right]\\
\mbox{s.t. } & & (\ref{eqn:energy_initial})-(\ref{eqn:energy_constraint}),\nonumber
\end{eqnarray}
where the expectation in the objective function is taken over all possible energy harvesting sample paths.

Due to the stochastic energy arrivals and temporal depending in the battery state, it is difficult to solve the stochastic optimization in (\ref{eqn:opt}) directly. The random update failures make the problem even more challenging. Therefore, in the following, we take an indirect approach, where we will first identify a lower bound on the long-term AoI for a broad class of online polices, and then construct online policies to achieve the lower bound.

\section{A Lower Bound}  \label{sec:lower}
First, we note that when the battery size is infinite, no energy overflow will happen, and the long-term average status updating rate is subject to the EH rate constraint. Specifically, we have the following lemma.

\begin{Lemma}[Lemma 1 in \cite{Yang:2017:AoI}]\label{lemma:rate}
Under any policy $\pi\in\Pi$, it must have $\lim_{T\rightarrow \infty} M(T)/T \leq 1$ almost surely.
\end{Lemma}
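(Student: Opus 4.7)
The plan is to bound $M(T)$ pathwise by the cumulative energy arrivals and then invoke the strong law of large numbers for the Poisson arrival process. The intuition is that since each update consumes exactly one unit of energy and the initial battery level is finite, the source cannot transmit faster than it harvests.

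First, I would introduce $A(T)$ as the total number of energy arrivals in $[0,T]$. Since $\{t_n\}$ is a Poisson process with rate $\lambda=1$, the SLLN for Poisson processes yields
\begin{align}
\lim_{T\rightarrow\infty}\frac{A(T)}{T}=1\quad\text{almost surely}.
\end{align}

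Second, I would telescope the energy queue recursion (\ref{eqn:energy_queue}) from $n=1$ to $n=M(T)$ to obtain
\begin{align}
E(l_{M(T)}^-)=E_0-M(T)+\sum_{n=1}^{M(T)}A_n.
\end{align}
Because the $A_n$'s count energy arrivals occurring in the disjoint intervals $[l_{n-1},l_n)\subseteq[0,l_{M(T)})\subseteq[0,T]$, we have $\sum_{n=1}^{M(T)}A_n\leq A(T)$. Combining this with the energy causality constraint $E(l_{M(T)}^-)\geq 1$ from (\ref{eqn:energy_constraint}) gives the pathwise bound
\begin{align}
M(T)\leq E_0-1+A(T).
\end{align}

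Finally, dividing by $T$, taking $\limsup$ as $T\rightarrow\infty$, and applying the almost-sure limit from the first step would immediately yield $\limsup_{T\rightarrow\infty}M(T)/T\leq 1$ almost surely. The argument is essentially a deterministic accounting identity layered with a standard SLLN, so there is no real obstacle; the only point requiring care is keeping the domains of the counting processes aligned, namely noting that only arrivals strictly before $l_{M(T)}$ can be spent on the first $M(T)$ updates, which is precisely why $\sum_{n=1}^{M(T)}A_n\leq A(T)$ holds pathwise.
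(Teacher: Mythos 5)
Your argument is correct. The paper does not reproduce a proof of this lemma (it simply cites Lemma 1 of \cite{Yang:2017:AoI}), but the argument there is precisely the one you give: telescoping the energy-queue recursion, invoking the causality constraint $E(l_{M(T)}^-)\geq 1$ to get the pathwise bound $M(T)\leq E_0-1+A(T)$, and applying the strong law of large numbers to the Poisson arrival process, so your proposal matches the intended proof in both substance and detail.
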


Besides, we also have the following intuitive yet important observation. 
\begin{Lemma}\label{lemma:finiteAoI}
For any $\pi\in\Pi$ that achieves a {\it finite} expected long-term average AoI, it must have $\lim_{T\rightarrow \infty} M(T)=\infty$ almost surely.
\end{Lemma}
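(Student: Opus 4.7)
The plan is to prove the contrapositive: I will show that if the event $\mathcal{E} := \{\lim_{T\to\infty} M(T) < \infty\}$ has strictly positive probability, then the expected long-term average AoI is infinite, contradicting the hypothesis. The intuition is simple: if only finitely many updates are ever transmitted, then after some finite time the AoI can only grow linearly with $t$, which forces the integrated age $R(T)$ to grow quadratically.

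First I would fix a sample path $\omega \in \mathcal{E}$. On this path, $M(T,\omega)$ is nondecreasing in $T$ and eventually constant, so it attains a finite limit $M_\infty(\omega) < \infty$. Since $N(T) \leq M(T)$ for every $T$ by definition (the number of successful deliveries cannot exceed the number of transmissions), we also have $N(T,\omega) \leq M_\infty(\omega)$ for all $T$, so only finitely many successful updates occur on $\omega$. Consequently, the time $S_{N(T)}(\omega)$ of the last successful update is bounded above by some finite constant $S^*(\omega)$. From the definition (\ref{defn:R(T)}) this yields the lower bound
\begin{align}
\frac{R(T)}{T} \;\geq\; \frac{(T - S_{N(T)})^2}{2T} \;\geq\; \frac{(T - S^*(\omega))^2}{2T},\nonumber
\end{align}
for all sufficiently large $T$, whose right-hand side tends to infinity as $T\to\infty$. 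Hence $\liminf_{T\to\infty} R(T,\omega)/T = \infty$ on the event $\mathcal{E}$.

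Next I would apply Fatou's lemma. Since $R(T)/T \geq 0$, we have
\begin{align}
\liminf_{T\to\infty} \Eb\!\left[\frac{R(T)}{T}\right]
\;\geq\; \liminf_{T\to\infty} \Eb\!\left[\frac{R(T)}{T}\mathbf{1}_{\mathcal{E}}\right]
\;\geq\; \Eb\!\left[\liminf_{T\to\infty} \frac{R(T)}{T}\mathbf{1}_{\mathcal{E}}\right]
\;=\; \infty\cdot\Pb(\mathcal{E}) \;=\; \infty.\nonumber
\end{align}
This immediately forces $\limsup_{T\to\infty} \Eb[R(T)/T] = \infty$, contradicting the assumption that $\pi$ achieves a finite expected long-term average AoI. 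Therefore $\Pb(\mathcal{E})=0$, which is exactly the claim.

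I do not anticipate any real obstacle here; the only minor care point is handling the degenerate subcases inside $\mathcal{E}$ (in particular $M_\infty = 0$, or $M_\infty \geq 1$ but every transmission happens to fail so that $N(T) \equiv 0$). In these cases one simply uses $S_{N(T)} = S_0 = 0$, so that $R(T)/T \geq T/2 \to \infty$, and the Fatou argument goes through unchanged.
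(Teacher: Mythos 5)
Your proof is correct. It takes the same overall contrapositive route as the paper (show that if $M(T)$ stays bounded with positive probability, the integrated age grows quadratically), but the execution differs in one substantive way. The paper's argument restricts attention to the sub-event that \emph{all} of the at most $M_0$ transmitted updates fail, which has probability at least $(1-p)^{M_0}=1-\sum_{i=1}^{M_0}p_i$ conditioned on $\{\lim_T M(T)<M_0\}$; on that sub-event $S_{N(T)}\equiv S_0=0$, so $R(T)\ge T^2/2$ deterministically and the expectation is bounded below by $\tfrac{T}{2}\bigl(1-\sum_{i=1}^{M_0}p_i\bigr)\epsilon\to\infty$. That factorization of probabilities implicitly uses the independence of the channel-failure outcomes from the event $\{M(T)<M_0\}$, which holds here only because the policy has no CSI or feedback. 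You instead work pathwise on the entire event $\mathcal{E}$, bounding the last successful-update time by a finite $S^*(\omega)$ and invoking Fatou's lemma. This buys you two things: you never need the independence between $M(T)$ and the channel realizations, and your bound degrades gracefully at $p=1$, where the paper's factor $(1-p)^{M_0}$ vanishes and its displayed bound becomes vacuous for $M_0\ge 1$. The paper's version is more elementary (no Fatou, an explicit probability computation), but yours is the more robust argument; both are valid for this model.
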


\if{0}
\begin{proof}
We prove it through contradiction. If $$\Pb[\lim_{T\rightarrow \infty} M(T)=\infty]<1,$$
there exists $\epsilon>0$ and $M_0>0$, such that 
$$\Pb[\lim_{T\rightarrow \infty} M(T)<M_0]\geq \epsilon.$$
Define
\begin{align} \label{defn:p_n}
p_n:=(1-p)^{n-1}p,
\end{align}
 i.e., the probability that $l_n$ is the first successful update after $l_0$.
Then, 
\begin{align}
&\limsup_{T\rightarrow \infty} \Eb\left[\frac{R(T)}{T}\right]\\
&\geq \lim_{T\rightarrow \infty} \frac{T^2}{2T}\cdot\Pb[\mbox{all $M(T)$ updates fail}, M(T)<M_0]\\
&\geq \lim_{T\rightarrow \infty} \frac{T}{2}\left(1-\sum_{i=1}^{M_0}p_i\right) \epsilon=\infty
\end{align}
which implies that the expected long-term average AoI cannot be finite. 
\end{proof}
\fi

The proof of Lemma~\ref{lemma:finiteAoI} is omitted due to space limitation.

In the following, we will focus on the policies that achieve finite expected long-term average AoI.
In order to facilitate our analysis, we introduce a broad class of online policies defined as follows.
\begin{Definition}[Bounded Updating Policy]
	If under a policy $\pi\in\Pi$, the $n$th updating epoch at the source (i.e., $l_n$) satisfies $\Eb[l_n]<\infty$ for any fixed $n\in\{1,2,\ldots\}$, $\pi$ is called a {\it bounded updating} policy.
\end{Definition}

Denote the set of bounded updating policy as $\Pi'$. Then, $\Pi'\subset\Pi$. Intuitively, any practical status updating policy should be in $\Pi'$, as it is undesirable to have any $n$th updating epoch, and the inter-update delay between any consecutive updating epochs before $l_n$, to become unbounded {\it in expectation}. We have the following lower bound for bounded updating policies.
\begin{Lemma}\label{lemma:lower}
The expected long-term average AoI is lower bounded by $\frac{2-p}{2p}$ for any $\pi\in\Pi'$.
\end{Lemma}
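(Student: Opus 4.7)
The plan is to reduce the lower bound to a quadratic-form inequality obtained by averaging over the channel randomness, and then to couple this with the energy causality constraint. Let me introduce the source inter-transmission intervals $\Delta_j := l_j - l_{j-1}$ for $j=1,\dots,M(T)$ together with the residual $\Delta_{M(T)+1}:=T-l_{M(T)}$, so that $\sum_{j=1}^{M(T)+1}\Delta_j=T$. Writing $R(T)=\tfrac12\sum_{k=1}^{N(T)+1}Y_k^2$ with $Y_k$ the sum of the $\Delta_j$'s in the $k$th ``group'' of consecutive source transmissions between two successive destination updates (and the last group terminating at $T$), the key observation is that two indices $j<j'$ fall into a common group if and only if $B_j=\cdots=B_{j'-1}=0$, where $B_n\in\{0,1\}$ is the Bernoulli($p$) success flag of the $n$th source transmission. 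Because the source has no feedback, the $B_n$'s are independent of $\{l_n\}$, so this event has probability $(1-p)^{j'-j}$ (the boundary case $j'=M(T)+1$ fits uniformly into this formula). Expanding $Y_k^2=(\sum_{j\in G_k}\Delta_j)^2$ and taking the conditional expectation yields
\begin{align*}
\Eb\Bigl[\sum_k Y_k^2 \,\Bigm|\, \{l_n\},M(T)\Bigr]
=\sum_j\Delta_j^2+2\!\sum_{j<j'}(1-p)^{j'-j}\Delta_j\Delta_{j'}
=\Delta^\top A\,\Delta,
\end{align*}
where $A$ is the $(M(T)+1)\times(M(T)+1)$ Toeplitz matrix with $A_{ij}=(1-p)^{|i-j|}$.

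Next I would invoke the generalized Cauchy–Schwarz inequality in the inner product induced by the positive-definite $A$, $(\Delta^\top A\Delta)(\lv^\top A^{-1}\lv)\geq(\lv^\top\Delta)^2=T^2$, together with the standard tridiagonal form of the inverse of an AR(1)-type covariance matrix. With $\alpha=1-p$, direct summation of the tridiagonal entries of $A^{-1}$ gives $\lv^\top A^{-1}\lv=M(T)p/(2-p)+1$, so
\begin{align*}
\Eb\bigl[R(T)\bigm|\{l_n\},M(T)\bigr]\geq\frac{T^2(2-p)}{2\bigl(p\,M(T)+2-p\bigr)}.
\end{align*}
To convert this conditional bound into an unconditional one I would apply Jensen's inequality to the convex map $x\mapsto 1/x$, together with the sample-path energy causality bound $M(T)\leq A(T)+E_0$, which gives $\Eb[M(T)]\leq T+E_0$ since $\Eb[A(T)]=T$ for the rate-$1$ Poisson energy process. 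This yields
\begin{align*}
\Eb\!\left[\frac{R(T)}{T}\right]\geq\frac{T(2-p)}{2\bigl(pT+pE_0+2-p\bigr)},
\end{align*}
and sending $T\to\infty$ delivers the claimed lower bound $(2-p)/(2p)$.

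I expect the main technical obstacles to be the careful bookkeeping in the quadratic-form identity (especially handling the last group and the residual $\Delta_{M(T)+1}$ on equal footing with the interior groups) and the closed-form evaluation of $\lv^\top A^{-1}\lv$; it is precisely the off-diagonal decay of $A$ that sharpens the naive Cauchy–Schwarz estimate $1/(2p)$, obtained by treating all $Y_k$ as exchangeable, into the tight $(2-p)/(2p)$. The bounded-updating assumption $\pi\in\Pi'$ enters only to guarantee the integrability and finiteness needed for the conditional-expectation and Jensen manipulations to be rigorously justified; the underlying martingale ingredient is the identity $\Eb[A(\tau)]=\Eb[\tau]$ for $\tau\in\Pi'$, which underpins the expected-energy bookkeeping.
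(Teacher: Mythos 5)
Your proof is correct, and it takes a genuinely different route from the paper's. The paper decomposes the expected age by conditioning on which transmission is the next success after each successful one (weighting the segment $[l_n,l^T_{n+j}]$ by $pp_j$ with $p_j=(1-p)^{j-1}p$), drops the first segment, applies Jensen's inequality to $\sum_n(l^T_{n+j}-l_n)^2$, and then needs the bounded-updating hypothesis together with Lemma~2 and the Bounded Convergence Theorem to kill the cross terms before invoking Lemma~1 to evaluate $\lim_T\Eb[T/M(T)]$. You instead compute the exact conditional expectation of $2R(T)$ given the energy sample path as the quadratic form $\Delta^\top A\Delta$ with the Toeplitz matrix $A_{ij}=(1-p)^{|i-j|}$ (your bookkeeping is right: the residual interval $\Delta_{M(T)+1}$ joins group $j'$'s predecessors exactly when $B_j=\cdots=B_{M(T)}=0$, so the $(1-p)^{j'-j}$ formula does hold uniformly), and then apply the $A$-weighted Cauchy--Schwarz inequality with the closed form $\lv^\top A^{-1}\lv=M(T)p/(2-p)+1$, which I have verified (the tridiagonal inverse of the $m\times m$ matrix sums to $\frac{m(1-\alpha)+2\alpha}{1+\alpha}$ with $\alpha=1-p$ and $m=M(T)+1$). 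Your route buys a clean non-asymptotic bound $\Eb[R(T)/T]\geq\frac{T(2-p)}{2(pT+pE_0+2-p)}$ for every finite $T$, and---contrary to your closing remark---it never actually uses $\pi\in\Pi'$: every quantity in your chain is nonnegative, Jensen is applied to a bounded convex function of $M(T)$, and $\Eb[M(T)]\leq T+E_0$ follows from energy causality plus $\Eb[A(T)]=T$ at the \emph{deterministic} time $T$ (no optional stopping or martingale identity is needed, so your last sentence about $\Eb[A(\tau)]=\Eb[\tau]$ is a red herring). In other words, your argument proves the lower bound for all of $\Pi$, not just the bounded class $\Pi'$, and dispenses with Lemma~2 and the bounded-convergence step entirely; the paper's argument is more elementary in that it avoids inverting $A$, but pays for it with the extra hypotheses and the asymptotic limit interchanges.
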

\begin{Proof}
Define $S_{i}^{T}:=\min\{S_i,T\}$, $l_{n}^{T}:=\min\{l_n,T\}$, and $p_n:=(1-p)^{n-1}p$. Then, under any $\pi\in\Pi'$, the expected average AoI over $[0,T]$ can be expressed as
\begin{align} 
\Eb\left[\frac{R(T)}{T}\right]  
&=\frac{1}{T}\Eb\left[\sum_{i=0}^{N(T)}\frac{(S_i^T - S_i)^2}{2}\right] 
\label{eqn:lowbound-11} \\
&=\frac{1}{2T}\Eb\left[ \sum_{n=1}^{M(T)} p_n l_n^2+\left(1-\sum_{n=1}^{M(T)}p_n\right) T^2\right.\nonumber\\
&\quad\quad+\left. \sum_{n=1}^{M(T)} \sum_{j=1}^{\infty} (l_{n+j}^T - l_n)^2p p_j 
\right], \label{eqn:lowbound-12}
\end{align}
where the first two terms inside the expectation in (\ref{eqn:lowbound-12}) correspond to the AoI contribution over $[0,S^T_1]$, and the last term correspond to the AoI contribution over any other $[S_i,S^T_{i+1}]$. This can be explained as follows. With fixed updating epochs $\{l_n\}$, depending on the realization of the channel state, the interval $[0,T]$ can be decomposed into segments, separated by successful updates. The probability to have $[l_n, l^T_{n+j}]$, $1\leq n\leq M(T),j\geq 1$, as one of such segment equals $pp_j$, which corresponds to the event that update at $l_n$ succeeds, and the next successful update is at $l_{n+j}$. The corresponding AoI contribution over $[l_n, l^T_{n+j}]$ thus needs to be weighted by $pp_j$ when the expected AoI is calculated. Since the AoI contribution over $[0,S^T_1]$ is always positive, in the following, we will drop it to obtain a lower bound, i.e., 
\begin{align}
&\lim_{T\rightarrow \infty}\Eb\left[\frac{R(T)}{T}\right]  \\
&\geq \lim_{T\rightarrow \infty}\frac{1}{2T} \Eb\left[ p\sum_{j=1}^{\infty}p_j \sum_{n=1}^{M(T)} (l_{n+j}^{T}-l_n)^2  \right]\\
& \geq \lim_{T\rightarrow \infty} \frac{1}{2T} \Eb\left[ p\sum_{j=1}^{\infty}p_j \frac{1}{M(T)} \left(jT-\sum_{n=1}^{j} l^T_n\right)^2  \right]\label{eqn:lowbound-3}  \\
&
= \lim_{T\rightarrow \infty} \frac{1}{2}p\sum_{j=1}^{\infty}p_{j} j^2 \Eb\left[\frac{(T-\bar{l}_j^T)^2}{M(T)T} \right],\label{eqn:lowbound-31}
\end{align}
where (\ref{eqn:lowbound-3}) is based on Jensen's inequality, (\ref{eqn:lowbound-31}) is derived by considering the cases $j\leq M(T)$ and $j>M(T)$ separately, and $\bar{l}_j^T:=\sum_{n=1}^{j}l_n^T/j$.

Since each term in the summation in (\ref{eqn:lowbound-31}) is positive, we can switch the order of limit and summation. We note that for any given $j$, $\Eb[\bar{l}_j^T]\leq \Eb[l_j]<\infty$ according to the definition of bounded policy. Besides, for any policy that renders a finite expected average AoI, we must have $\lim_{T\rightarrow \infty}M(T)=\infty$ almost surely according to Lemma~\ref{lemma:finiteAoI}. Therefore,  according to the Bounded Convergence Theorem, we have
\begin{align}
\lim_{T\rightarrow \infty}\Eb\left[\frac{\bar{l}^T_j}{M(T)}\right]=0, \quad \lim_{T\rightarrow \infty}\Eb \left[\frac{\bar{l}_j^2}{M(T)T} \right]=0 .
\end{align} 
Combining with (\ref{eqn:lowbound-31}), we have
\begin{align}
&\lim_{T\rightarrow \infty}\Eb\left[\frac{R(T)}{T}\right]\geq 
\frac{1}{2}p\sum_{j=1}^{\infty}p_j j^2 \lim_{T\rightarrow \infty} \Eb\left[\frac{T}{M(T)}\right]\\
& =\frac{1}{2}p \sum_{j=1}^{\infty} j^2 (1-p)^{j-1}p  =\frac{2-p}{2p}, \label{eqn:lowbound-4}
\end{align}
where (\ref{eqn:lowbound-4}) follows from Lemma~\ref{lemma:rate}.
\end{Proof}

\section{Optimal Online Status Updating}
In this section, we propose online status updating policies to achieve the lower bound derived in Section~\ref{sec:lower}. We will start with the BU updating policy introduced in \cite{Yang:2017:AoI}. Although we assume a noisy channel in this work, when there is no CSI or feedback available to the source, intuitively, it is still desirable for the source to update in a uniform fashion, so that the successfully received updates at the destination would be most uniformly distributed in time. 


\begin{Definition}[BU Updating]
The sensor is scheduled to update the status at $s_n=n$, $n=1,2,\ldots$. The sensor performs the task at $s_n$ if $E(s^-_n)\geq 1$; Otherwise, the sensor keeps silent until the next scheduled status update epoch.
\end{Definition}

BU updating ensures that the energy causality constraint is always satisfied. We expect that BU updating achieves the lower bound in Lemma~\ref{lemma:lower}, however, analyzing its AoI performance is very challenging. Although we are able to identify a renewal structure in the system status evolution under the BU updating policy (i.e., a renewal interval can begin right after the sensor successfully delivers an update and the battery state becomes $E_0-1$), the analysis of the expected average AoI over one renewal interval is still very complicated, mainly due to two reasons: 

First, different from the perfect channel case~\cite{Yang:2017:AoI}, the actual update epoch at the destination may deviate from the scheduled update epochs $s_n$ due to two possible events: battery outage and update failure. Although the average AoI can be characterized in systems where only one of such events can happen, it is hard to analyze the AoI when the effects of both events are involved. 

Second, the expected length of such a renewal interval is unbounded. This is because the battery evolution under BU updating can be modeled as a Martingale process, and as we will show in the proof of Lemma~\ref{lemma:N(T1)}, the expected time when it becomes empty for the first time (i.e., hitting time of zero) is infinity. Since with a non-zero probability the renewal interval contains such an interval, the expected length of each renewal interval is thus unbounded, and the corresponding expected average AoI becomes intractable.

To overcome such challenges, we will construct a sequence of {\it virtual} policies, and show that the expected time average AoI under those virtual policies approaches the lower bound in Lemma~\ref{lemma:lower}. Since such virtual policies are sub-optimal to the BU updating policy, the optimality of BU updating can thus be proved. In order to simplify the definition and analysis of the virtual policy, we assume $E_0=2$. The proof can be slightly modified to show that the optimality of the proposed policy is valid for any $E_0\geq 0$. 

\begin{Definition}[BU-ER$_{T_0}$] \label{def:bu-er}
The sensor performs BU updating until the battery level after updating, i.e., $E(s_n^+)$, becomes zero for the first time, or until time $T_0^+$, in which case the sensor depletes its battery; After that, when the battery level $E(s_n^+)$ becomes higher than or equal to one for the first time, the sensor reduces $E(s_n^+)$ to one, and then repeats the process.
\end{Definition}

\begin{Lemma} \label{lemma:suboptimal}
For any $T_0>0$, BU-ER$_{T_0}$ updating policy is sub-optimal to the BU updating policy. 
\end{Lemma}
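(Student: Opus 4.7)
The plan is a pathwise coupling argument: run both BU and BU-ER$_{T_0}$ on the same realization of the energy arrival Poisson process $\{t_k\}$ and the same realization of the i.i.d.\ Bernoulli$(p)$ channel outcomes assigned to the scheduled slots $s_n = n$. Under this coupling, I want to show that every successful update produced by BU-ER$_{T_0}$ is also a successful update produced by BU, and therefore the accumulated AoI satisfies $R^{\mathrm{BU}}(T)\le R^{\mathrm{BU\text{-}ER}}(T)$ pathwise, which yields the lemma after taking expectations and $\limsup_{T\to\infty}$.

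First I would compare the battery trajectories. Let $E^{\mathrm{BU}}(s_n^-)$ and $E^{\mathrm{BU\text{-}ER}}(s_n^-)$ denote the energy levels just before slot $n$ under the two policies. Since the two rules give the same arrival increments and BU-ER$_{T_0}$ only ever removes energy beyond what BU consumes (either by forcing depletion at $T_0^+$ or by reducing the battery back to one when it first reaches or exceeds one after such a depletion), a straightforward induction on $n$ shows $E^{\mathrm{BU\text{-}ER}}(s_n^-)\le E^{\mathrm{BU}}(s_n^-)$ for every $n$. Consequently, whenever BU-ER$_{T_0}$ has $E^{\mathrm{BU\text{-}ER}}(s_n^-)\ge 1$ and transmits at $s_n$, BU also transmits at $s_n$; the set of transmission epochs under BU-ER$_{T_0}$ is a subset of those under BU. Because the channel outcomes are coupled slot-by-slot, the set of successful destination-side updates under BU-ER$_{T_0}$ is likewise a subset of the set under BU.

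Next I would show that adding successful updates can only decrease the accumulated AoI. If $(S_i,S_{i+1})$ is an inter-arrival interval of successful updates under BU-ER$_{T_0}$, and BU has additional successful updates at times $S_i < \tau_1 < \cdots < \tau_m < S_{i+1}$, then the contribution to $R(T)$ on this interval is
\begin{equation}
\frac{(S_{i+1}-S_i)^2}{2} \;\ge\; \frac{(\tau_1-S_i)^2 + (\tau_2-\tau_1)^2 + \cdots + (S_{i+1}-\tau_m)^2}{2},
\end{equation}
since for any partition of a nonnegative number, the sum of squares is minimized by keeping pieces together. The same inequality applies to the trailing partial interval $(S_{N(T)},T]$ by the identical algebraic fact. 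Summing over all such intervals gives $R^{\mathrm{BU}}(T)\le R^{\mathrm{BU\text{-}ER}}(T)$ on every sample path.

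Taking expectation with respect to the shared probability space and then $\limsup_{T\to\infty}$ yields
\begin{equation}
\limsup_{T\to\infty}\Eb\!\left[\frac{R^{\mathrm{BU}}(T)}{T}\right] \;\le\; \limsup_{T\to\infty}\Eb\!\left[\frac{R^{\mathrm{BU\text{-}ER}}(T)}{T}\right],
\end{equation}
which is exactly the sub-optimality claim. The main obstacle I foresee is keeping the battery-dominance induction clean through the ``reduce to one'' resets of BU-ER$_{T_0}$: each reset happens at a slot where BU-ER has at least the energy BU has at that moment minus whatever BU-ER already gave up, so one must verify that the reset preserves, rather than violates, the inequality $E^{\mathrm{BU\text{-}ER}}(s_n^-)\le E^{\mathrm{BU}}(s_n^-)$. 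Once that bookkeeping is done, the remainder is the essentially algebraic observation that densifying the successful update sequence reduces the piecewise-quadratic age integral.
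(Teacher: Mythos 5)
Your proposal is correct and follows essentially the same route as the paper: a sample-path coupling of the energy arrivals and channel outcomes, battery dominance of BU over BU-ER$_{T_0}$, hence a subset relation on the successful update epochs, and then monotonicity of the age. The only cosmetic difference is that the paper concludes via pointwise dominance of the instantaneous AoI (the latest successful update under BU is always at least as recent), whereas you use the equivalent sum-of-squares refinement inequality on $R(T)$.
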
 
\begin{proof}
We note that BU-ER$_{T_0}$ updating is identical to BU updating except the energy removal at time $T_0$ and when $E(s_n^+)$ becomes higher than one. Given the same energy harvesting sample path, the battery level under BU is always higher than that under BU-ER$_{T_0}$. Thus, BU-ER$_{T_0}$ incurs more infeasible status updating epochs. With the same channel fading profile, the instantaneous AoI under BU-ER$_{T_0}$ updating is always greater than or equal to that under BU updating sample path-wisely. Thus, the expected time-average AoI under BU-ER$_{T_0}$ is greater than or equal to that under BU, which proves the lemma. 
\end{proof}

Since BU-ER$_{T_0}$ updating policy is a {\it renewal} policy, to analyze the expected long-term average AoI, it suffices to analyze the expected average AoI over one renewal interval. In the following, we will focus on the first renewal interval, and show that the corresponding expected average AoI converges to the lower bound in Lemma~\ref{lemma:lower} as $T_0$ increases. First, as illustrated in Fig~\ref{fig:bu-er}, we note that the renewal interval consists of two stages. The first stage starts at time zero and ends until $E(s_n^+)$ becomes zero for the first time, or until time $T_0^+$. We denote $T_1$ as the duration of the first stage. We note that all scheduled status updating epochs over $(0, T_1]$ are feasible. The second stage starts at $T_1$ and ends when the battery level $E(s_n^+)$ becomes higher than or equal to one for the first time after $T_1$. We denote $T_2$ as the duration of the second stage. 
  
  \begin{figure}[t]\centering
\includegraphics[width=3.3in]{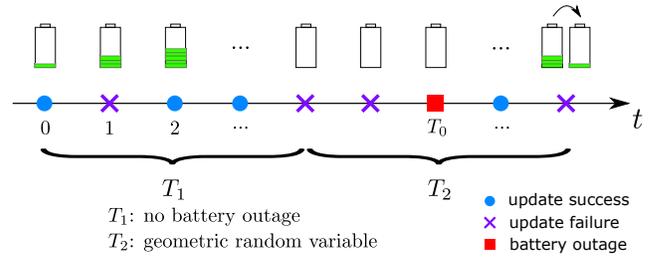}
\vspace{-0.05in}
\caption{An illustration of the BU-ER$_{T_0}$ updating policy.}\label{fig:bu-er}
\vspace{-0.1in}
\end{figure}

\begin{Lemma} \label{lemma:N(T1)}
Under BU-ER$_{T_0}$ updating, $\lim_{T_0\rightarrow \infty} \Eb[T_1]=\infty.$
\end{Lemma}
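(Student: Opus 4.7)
The plan is to identify $T_1$ as the minimum of the deterministic cutoff $T_0$ and the first-passage time $\tau$ of a zero-mean random walk describing the battery, prove $\Eb[\tau]=\infty$ via a Wald/optional-stopping argument, and then invoke monotone convergence as $T_0\to\infty$.

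First, I would set up the battery process during stage one. By definition of stage one, every scheduled update at $s_n=n$ is feasible, so with $E_0=2$ and the update at time zero we have $B_0 := E(0^+) = 1$, and for every integer $n$ at which stage one has not yet ended,
\begin{align*}
B_n := E(s_n^+) = 1 + \sum_{i=1}^{n} (A_i - 1),
\end{align*}
where the $A_i\sim\text{Poisson}(1)$ are i.i.d.\ energy arrivals per unit interval. Since $\Eb[A_i-1]=0$, $\{B_n\}$ is an integer-valued martingale whose only negative increment is $-1$; consequently the first depletion epoch $\tau:=\min\{n\ge 1:B_n=0\}$ satisfies $B_\tau \equiv 0$ (no overshoot). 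Because scheduled epochs lie on the integer grid, the duration of stage one is exactly $T_1 = \min(\tau, T_0)$.

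Next, I would establish $\Eb[\tau]=\infty$ by contradiction. Suppose $\Eb[\tau]<\infty$. Then Wald's identity for the i.i.d.\ zero-mean increments $A_i-1$ gives
\begin{align*}
\Eb[B_\tau] - B_0 = \Eb\!\left[\sum_{i=1}^{\tau} (A_i-1)\right] = \Eb[\tau]\,\Eb[A_1-1] = 0,
\end{align*}
so $\Eb[B_\tau]=1$. This contradicts $B_\tau=0$ a.s., hence $\Eb[\tau]=\infty$. As $T_0\to\infty$, the sequence $\min(\tau,T_0)$ is nondecreasing and converges pointwise to $\tau$, so monotone convergence yields $\lim_{T_0\to\infty}\Eb[T_1] = \Eb[\tau] = \infty$.

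The main obstacle is justifying Wald's identity/optional stopping without circular reasoning: one cannot apply the theorem unconditionally because $\Eb[\tau]$ is not known a priori to be finite, so the cleanest route is the contradiction argument above in which finiteness is only hypothesized. An alternative that bypasses optional stopping is to identify $\tau$ with the total progeny of a critical Poisson Galton-Watson branching process (the Borel distribution with parameter $1$), whose expectation is classically infinite.
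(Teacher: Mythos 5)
Your proof is correct, but it takes a genuinely different route from the paper's. The paper works with the exponential martingale $\exp(-\alpha \Omega_n - n\gamma(\alpha))$, $\gamma(\alpha)=e^{-\alpha}-(1-\alpha)$, applies optional stopping at the depletion time $\kappa$, differentiates the resulting identity with respect to $\alpha$, and deduces $\Eb[\kappa]\geq e^{-\alpha}/(1-e^{-\alpha})$, which diverges as $\alpha\rightarrow 0^{+}$. You instead use the linear zero-mean martingale together with Wald's identity in a contradiction argument: if $\Eb[\tau]<\infty$ then $\Eb[B_\tau]=B_0=1$, contradicting $B_\tau=0$, which holds because the walk is skip-free downward (only negative increment is $-1$) and is almost surely absorbed under the finiteness hypothesis. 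Your argument is more elementary and arguably cleaner: it avoids justifying the exponential-martingale optional-stopping identity (which the paper imports from an earlier reference) and the differentiation under the expectation, and you hypothesize $\Eb[\tau]<\infty$ precisely so that Wald's identity applies without circularity. You also handle the passage $T_0\rightarrow\infty$ explicitly via monotone convergence of $\min(\tau,T_0)$, a step the paper treats informally (``when $T_0\rightarrow\infty$, \ldots the corresponding $T_1=\kappa$''). What the paper's method buys in exchange is a family of explicit finite lower bounds on $\Eb[\kappa]$ for each $\alpha>0$, whereas your argument establishes only the divergence itself; for the purposes of this lemma that difference is immaterial. Your closing remark identifying $\tau$ with the total progeny of a critical Poisson Galton--Watson process (the Borel distribution with parameter $1$, which has infinite mean) is a valid third route.
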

\begin{proof}
Consider a ``random walk" $\{\Omega_n\}_{n=0}^{\infty}$, which start with $1$ and increments with $A_n-1$, where $A_n$ is an i.i.d. Poisson random variable with parameter $1$. Denote the first $0$-hitting time for $\{\Omega_n\}_{n=0}^{\infty}$ as $\kappa$. Then $\Omega_0=1$ and $\Omega_{\kappa}=0$. Note that when $T_0\rightarrow\infty$, $\{\Omega_n\}_{n=0}^{\kappa}$ is identical to the battery level evolution process $\{E(s_n^+)\}_{n=0}^{\kappa}$ under the BU-ER$_{T_0}$ updating policy almost surely, and the corresponding $T_1=\kappa$. 

Define a Martingale process associated with $\{\Omega_n\}_{n=0}^{\infty}$ as $\{\exp(-\alpha \Omega_n - n\gamma(\alpha))\}_{n=0}^{\infty}$ with $\alpha>0$ and $\gamma(\alpha)=e^{-\alpha}-(1-\alpha)>0$. According to the proof of Theorem $1$ in \cite{Yang:jsac:2016}, 
\begin{align} \label{eqn:martingale-1}
\exp(-\alpha \Omega_0) = \Eb[\exp(-\alpha \Omega_{\kappa} - \kappa\gamma(\alpha))].
\end{align}
Taking the derivative of both sides of (\ref{eqn:martingale-1}) with respect to $\alpha$, 
\begin{align} \label{eqn:martingale-2}
\Omega_0 \exp(-\alpha \Omega_0) 
= \Eb[(\Omega_{\kappa}+\kappa\gamma^{\prime}(\alpha))\exp(-\alpha \Omega_{\kappa} - \kappa\gamma(\alpha) )]. 
\end{align}
Since $\Omega_0=1$ and $\Omega_{\kappa}=0$,  (\ref{eqn:martingale-2}) can be reduced to
\begin{align} \label{eqn:martingale-3}
\exp(-\alpha)
=\Eb[\kappa \gamma^{\prime}(\alpha) \exp(-\kappa \gamma(\alpha) )]\leq \Eb[\kappa\gamma^{\prime}(\alpha)],
\end{align}
where the inequality follows from the fact that $\kappa\gamma(\alpha) \geq 0$.

Dividing both sides of (\ref{eqn:martingale-3}) by $\gamma^{\prime}(\alpha) $, we have
\begin{align}
\Eb[\kappa]\geq \exp(-\alpha)/\gamma^{\prime}(\alpha).
\end{align}
Note that 
\begin{align}
\lim_{\alpha \rightarrow 0} \gamma^{\prime}(\alpha)&=\lim_{\alpha \rightarrow 0}(-e^{-\alpha}+1)=0^{+}.
\end{align} 
Thus, we have
\begin{align}
\lim_{T_0\rightarrow \infty}\Eb[T_1] \geq \lim_{\alpha \rightarrow 0} \exp(-\alpha)/\gamma^{\prime}(\alpha) =\infty.
\end{align}
\end{proof}

\begin{Lemma}\label{lemma:bounded}
Under BU-ER$_{T_0}$ updating, $\Eb[T_2]$, $\Eb[T^2_2]$, $\Eb[T_1-S_{N(T_1)}]$, $\Eb[(T_1-S_{N(T_1)})^2]$ are bounded.
\end{Lemma}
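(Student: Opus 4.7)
The plan is to bound each of the four quantities separately, using the fact that $T_1$ is determined solely by the energy-arrival process and is therefore independent of the channel-fading outcomes.

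For $T_2$, I would start from $E(T_1^+)=0$ and let $Y_k$ denote the battery level at the $k$th scheduled update epoch in stage 2, where $A_k$ is the number of Poisson$(1)$ arrivals in the $k$th unit window after $T_1$. The evolution $Y_k=\max(Y_{k-1}+A_k-1,0)$ with $Y_0=0$ shows that while the battery stays at zero, $Y_k$ remains at $0$ when $A_k\leq 1$ and jumps to $A_k-1\geq 1$ as soon as $A_k\geq 2$. Hence the first index with $Y_k\geq 1$ is exactly the first $k$ with $A_k\geq 2$, which has a Geometric distribution with success probability $\Pb[\mathrm{Poisson}(1)\geq 2]=1-2/e$ (up to one initial window of length $\leq 1$ if $T_1=T_0^+$ is non-integer, absorbed by an additive unit). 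Consequently $T_2$ is stochastically dominated by a Geometric random variable plus a constant, so $\Eb[T_2]$ and $\Eb[T_2^2]$ are bounded uniformly in $T_0$.

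For $T_1-S_{N(T_1)}$, every scheduled update in $(0,\lfloor T_1\rfloor]$ is feasible by the definition of stage 1, so successes are generated by an i.i.d.\ Bernoulli$(p)$ sequence $\{c_n\}_{n\geq 1}$ independent of $T_1$. I would extend $\{c_n\}$ to an i.i.d.\ Bernoulli$(p)$ sequence on all of $\Zb$, still independent of the energy process, and define $G:=\lfloor T_1\rfloor-\max\{m\leq\lfloor T_1\rfloor:\ c_m=1\}$. By shift invariance of Bernoulli noise on $\Zb$, $G$ is Geometric$(p)$ starting from zero regardless of $\lfloor T_1\rfloor$ and is independent of $T_1$, so $\Eb[G]=(1-p)/p$ and $\Eb[G^2]<\infty$. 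I then verify the sample-path bound $T_1-S_{N(T_1)}\leq 1+G$ in both cases: if at least one real success occurs in $(0,T_1]$, then $S_{N(T_1)}$ coincides with $\max\{m\leq\lfloor T_1\rfloor:\ c_m=1\}$ and the $1$ absorbs the fractional part $T_1-\lfloor T_1\rfloor$; if all real updates fail, the latest success in the extended process lies at a non-positive index, so $G\geq\lfloor T_1\rfloor\geq T_1-1$ while $T_1-S_{N(T_1)}=T_1$. Therefore both $\Eb[T_1-S_{N(T_1)}]$ and $\Eb[(T_1-S_{N(T_1)})^2]$ are finite and $T_0$-independent.

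The hard part is precisely the residual-age quantity $T_1-S_{N(T_1)}$: since $\Eb[T_1]\to\infty$ as $T_0\to\infty$ by Lemma~\ref{lemma:N(T1)}, any bound obtained by conditioning on $T_1=t$ and retaining a $t$-dependent tail term would blow up with $T_0$. The extension-to-$\Zb$ construction circumvents this by producing a single integrable dominating variable $1+G$ whose law does not involve $T_0$, which is exactly the uniform control required for the per-renewal AoI calculation under BU-ER$_{T_0}$ to admit a clean $T_0\to\infty$ limit matching the lower bound from Lemma~\ref{lemma:lower}.
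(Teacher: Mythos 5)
Your proof is correct, and for $T_2$ it is essentially the paper's argument: starting from an empty battery after $T_1$, the post-update level stays at zero exactly while the per-slot Poisson$(1)$ arrival is $0$ or $1$ (probability $2e^{-1}$), so $T_2$ is geometric with parameter $1-2e^{-1}$ (plus at most one fractional slot when stage 1 is cut off at $T_0$), giving bounded first and second moments. For the age terms $T_1-S_{N(T_1)}$, however, you take a genuinely different route. The paper cites Proposition 3.4.6 of Ross to assert that the expected age (and second moment) of the renewal process of successful deliveries is uniformly bounded in $t$, and then transfers this to the random time $T_1$ via the (implicit) independence of $T_1$ from the channel states. You instead build an explicit dominating variable: extending the i.i.d.\ Bernoulli$(p)$ success indicators to all of $\Zb$, defining $G$ as the backward distance from $\lfloor T_1\rfloor$ to the nearest success, and verifying the pathwise bound $T_1-S_{N(T_1)}\leq 1+G$ in both the ``some success in $(0,T_1]$'' and ``no success'' cases. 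Since $T_1$ is a function of the energy-arrival process alone, $G$ is Geometric$(p)$ independent of $T_1$, so both moments are finite and $T_0$-free. The two arguments rest on the same underlying fact (geometric inter-success gaps keep the inspection-paradox term uniformly controlled), but yours is self-contained, makes the independence of $T_1$ from the channel explicit, and handles the random stopping time $T_1$ --- whose mean diverges as $T_0\to\infty$ --- without appealing to a limit theorem stated at deterministic times; the paper's version is shorter but leaves that transfer implicit.
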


\begin{proof}
First, we note that under BU-ER$_{T_0}$ updating, the energy arrival over $[s_n,s_{n+1})$ is a Poisson random variable $A_{n+1}$ with parameter $1$. Therefore, if the battery level is zero at $s_n^+$, it remains zero at $s_{n+1}^+$ if $A_{n+1}=0$ or $1$, which happens with probability $q:=2e^{-1}$. It goes above one with probability $1-q$. Thus, $T_2$ is a geometric random variable with parameter $1-q$, whose first and second moments are bounded.

Next, we note that under the BU-ER$_{T_0}$ updating, the AoI over $[0,T_1]$ is a renewal reward process, which resets to zero at $\{S_i\}_{i=1}^{N(T_1)}$. According to Proposition 3.4.6 in~\cite{ross:1996}, $\lim_{t\rightarrow\infty}\Eb[S_{N(t)}-t]$ is bounded. Therefore $\Eb[S_{N(T_1)}-T_1]$ is uniformly bounded for any $T_1$. Similarly, we can show that $\Eb[(S_{N(T_1)}-T_1)^2]$ is uniformly bounded.
\end{proof}

\begin{Theorem}\label{lemma:AoI}
As $T_0\rightarrow\infty$, the expected long-term average AoI under BU-ER$_{T_0}$ is upper bounded by $\frac{2-p}{2p}$.
\end{Theorem}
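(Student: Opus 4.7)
Since BU-ER$_{T_0}$ is a renewal policy with cycle length $T_1+T_2$, by the renewal reward theorem the expected long-term average AoI equals $\Eb[R(T_1+T_2)]/\Eb[T_1+T_2]$. My plan is to upper bound $\Eb[R(T_1+T_2)]$ by $\tfrac{2-p}{2p}\Eb[T_1]+C$ for a constant $C$ independent of $T_0$, and then let $\Eb[T_1]\to\infty$ via Lemma~\ref{lemma:N(T1)} to conclude. I would decompose the cycle AoI as $R(T_1+T_2)=R(T_1)+\int_{T_1}^{T_1+T_2}\Delta(t)\,dt$, where $\Delta(t)$ denotes the instantaneous AoI, and treat the two pieces separately.

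The Stage~2 piece is handled easily. Since any successful update during $[T_1,T_1+T_2]$ can only reduce the AoI, the integrand is pathwise at most $(T_1-S_{N(T_1)})+(t-T_1)$, so $\int_{T_1}^{T_1+T_2}\Delta(t)\,dt\le (T_1-S_{N(T_1)})T_2+T_2^2/2$. Applying Cauchy--Schwarz to the first term and invoking Lemma~\ref{lemma:bounded} yields $\Eb\bigl[\int_{T_1}^{T_1+T_2}\Delta(t)\,dt\bigr]\le C_2$ for some $C_2$ independent of $T_0$.

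The Stage~1 piece is the main technical step. The key observation is that during Stage~1 every scheduled update is feasible, so the battery dynamics---and hence $T_1$---are driven solely by the energy arrival process and are \emph{independent} of the channel success process. Conditioning on $T_1=t$, the inter-successful-update times $X_1,X_2,\ldots$ become i.i.d.\ Geometric$(p)$ with $\Eb[X]=1/p$ and $\Eb[X^2]=(2-p)/p^2$. Using $R(T_1)\le \tfrac{1}{2}\sum_{i=1}^{N(T_1)+1}X_i^2$ (since $T_1-S_{N(T_1)}\le X_{N(T_1)+1}$) together with Wald's identity applied to the stopping time $N(t)+1$ gives $\Eb[R(T_1)\mid T_1=t]\le \Eb[N(t)+1]\cdot\tfrac{\Eb[X^2]}{2}$. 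Bounding the expected residual lifetime of the Geometric renewal process by a constant then yields $\Eb[N(t)+1]\le pt+O(1)$, hence $\Eb[R(T_1)\mid T_1=t]\le \tfrac{2-p}{2p}t+C_1$, and integrating against the law of $T_1$ produces $\Eb[R(T_1)]\le \tfrac{2-p}{2p}\Eb[T_1]+C_1$.

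Combining both bounds gives $\Eb[R(T_1+T_2)]/\Eb[T_1+T_2]\le \tfrac{2-p}{2p}+(C_1+C_2)/\Eb[T_1]$, and Lemma~\ref{lemma:N(T1)} sends the second term to zero as $T_0\to\infty$. The step I expect to be most delicate is the uniform (in $t$) linear bound on $\Eb[R(T_1)\mid T_1=t]$: the independence between $T_1$ and the channel success process is precisely what permits a clean conditional renewal-reward analysis, and one must verify that the $O(1)$ residual-life correction in Wald's identity is indeed bounded uniformly in $t$, so that the constant $C_1$ does not depend on $T_0$.
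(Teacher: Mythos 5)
Your proposal is correct and follows essentially the same route as the paper's proof: both rest on the renewal-reward structure of BU-ER$_{T_0}$, the identification of the Stage-1 inter-success times with i.i.d.\ Geometric$(p)$ variables (valid because no battery outage occurs before $T_1$, so successes depend only on the channel), Wald's identity at the stopping time $N(T_1)+1$ with a bounded residual-life correction, and Lemma~\ref{lemma:N(T1)} to kill the $O(1)$ terms after dividing by $\Eb[T_1]$. The only difference is presentational---you condition on $T_1=t$ and establish a uniform linear bound $\tfrac{2-p}{2p}t+C_1$ before averaging, while the paper manipulates the unconditional ratio of expectations directly---and your handling of the Stage-2 contribution matches the role of Lemma~\ref{lemma:bounded} in the paper.
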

\begin{proof}
First, we note that the 
\begin{align}
&\lim_{T_0\rightarrow \infty}\frac{\Eb[ (T_1+T_2-S_{N(T_1)})^2 ]}{2\Eb[T_1+T_2]}\nonumber \\
&= \lim_{T_0\rightarrow \infty} \frac{\Eb[(T_1-S_{N(T_1)})^2] +
	\Eb[T_2^2] +2 \Eb[T_1-S_{N(T_1)}]\Eb[T_2] }{2\Eb[T_1]} \label{eqn:usebounded-1} \\
&=0, \label{eqn:usebounded-2} 
\end{align}
where (\ref{eqn:usebounded-1}) follows from that the two events $T_1-S_{N(T_1)}$ and $T_2$ are independent, and (\ref{eqn:usebounded-2}) follows from Lemma \ref{lemma:N(T1)} and Lemma \ref{lemma:bounded}.

Then, we note that under BU-ER$_{T_0}$
\begin{align*}
\lim_{T\rightarrow\infty}\Eb\left[\frac{R(T)}{T}\right]&\leq \frac{\sum_{i=1}^{N(T_1) }X_i^2 + (T_1+T_2-S_{N(T_1)})^2 }{2\Eb[T_1+T_2]}.
\end{align*}

Consider the channel state realization at the scheduled status updating epochs under BU (and BU-ER) updating. Let $Y_i$ be the duration between the $i$th and $i-1$st epochs when the channel states are good and the corresponding update would be successful if it were sent. Then, $\{Y_i\}_{i=1}^{N(T_1)}$ is identical to $\{X_i\}_{i=1}^{N(T_1)}$. This is because there is no battery outage over $[0, T_1]$, and whether an update is successful or not only depends on the channel state.
Combining with (\ref{eqn:usebounded-2}), we have
 \begin{align}
&\lim_{T_0\rightarrow\infty}\lim_{T\rightarrow\infty}\Eb\left[\frac{R(T)}{T}\right]\leq \lim_{T_0\rightarrow\infty} \frac{\Eb[\sum_{i=1}^{N(T_1)}X_i^2 ]}{2\Eb[T_1+T_2]}  \label{eqn:upbound-01} \\
&\leq \lim_{T_0\rightarrow\infty} \frac{\Eb\left[\sum_{i=1}^{N(T_1)+1}Y_i^2 \right]}{2\Eb\left[\sum_{i=1}^{N(T_1)+1}Y_i-(\sum_{i=1}^{N(T_1)+1}Y_i-T_1)\right]}  \\
&=\lim_{T_0\rightarrow\infty} \frac{\Eb[N(T_1)+1]\Eb[Y_1^2]}{2\Eb[N(T_1)+1]\Eb[Y_1]-2\Eb\left[\sum_{i=1}^{N(T_1)+1}Y_i-T_1\right]}, \label{eqn:upbound-1} 
\end{align}
where (\ref{eqn:upbound-1}) follows from Wald's equality and the fact that $N(T_1)+1$ is a stopping time for $\{Y_i\}$ for any given $T_1$.

Since $\Eb[N(T_1)+1]\Eb[Y_1]\geq \Eb[T_1]$, according to Lemma~\ref{lemma:N(T1)}, 
\begin{align}
\lim_{T_0\rightarrow\infty}\Eb[N(T_1)+1]\Eb[Y_1]\geq \lim_{T_0\rightarrow\infty}\Eb[T_1]=\infty.
\end{align}
Meanwhile, we have $\Eb\left[\sum_{i=1}^{N(T_1)+1}Y_i-T_1\right]$ uniformly bounded for any $T_1$
based on Proposition 3.4.6 in \cite{ross:1996}. Therefore, (\ref{eqn:upbound-1}) is equal to $\frac{\Eb[Y_1^2]}{2\Eb[Y_1]}$, i.e., $\frac{2-p}{2p}$.
\end{proof}

Lemma \ref{lemma:lower}, Lemma \ref{lemma:suboptimal} and Theorem~\ref{lemma:AoI} imply the optimality of the BU updating, as summarized in the following theorem.

\begin{Theorem}\label{thm:myopic}
Among the set of bounded policies, the BU updating policy is optimal when the battery size is infinite. 
\end{Theorem}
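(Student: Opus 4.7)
The plan is to argue that the three preceding results fit together to sandwich the BU updating policy's long-term average AoI against the universal lower bound $(2-p)/(2p)$. The skeleton is: Lemma~\ref{lemma:lower} gives a lower bound valid for every bounded policy in $\Pi'$; Lemma~\ref{lemma:suboptimal} says that BU dominates every BU-ER$_{T_0}$ sample-path-wise in terms of instantaneous AoI, so its expected long-term average AoI is no larger; and Theorem~\ref{lemma:AoI} says that as $T_0 \to \infty$ the BU-ER$_{T_0}$ average AoI is upper bounded by $(2-p)/(2p)$. Chaining these three inequalities will pin the BU average AoI to $(2-p)/(2p)$, matching the lower bound.

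First I would verify that BU updating itself belongs to $\Pi'$ so that Lemma~\ref{lemma:lower} applies. Under BU, the $n$th actual updating epoch $l_n$ equals the $n$th scheduled slot $s_k = k$ at which the battery is positive; since energy arrives at unit rate and each update costs one unit, the battery process is a non-negative supermartingale-type random walk with drift zero, and a short argument (or direct appeal to Lemma~\ref{lemma:rate} together with $E_0\geq 1$) shows $\mathbb{E}[l_n]<\infty$ for every fixed $n$. Hence BU $\in \Pi'$ and the lower bound $(2-p)/(2p)$ applies.

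Next I would combine the upper-bound chain. For every finite $T_0$, Lemma~\ref{lemma:suboptimal} gives
\begin{equation*}
\limsup_{T\to\infty}\mathbb{E}\!\left[\frac{R^{\mathrm{BU}}(T)}{T}\right]
\;\leq\;
\limsup_{T\to\infty}\mathbb{E}\!\left[\frac{R^{\mathrm{BU\text{-}ER}_{T_0}}(T)}{T}\right].
\end{equation*}
Taking $T_0\to\infty$ on the right-hand side and invoking Theorem~\ref{lemma:AoI},
\begin{equation*}
\limsup_{T\to\infty}\mathbb{E}\!\left[\frac{R^{\mathrm{BU}}(T)}{T}\right]
\;\leq\; \frac{2-p}{2p}.
\end{equation*}
Since the left-hand side is at least $(2-p)/(2p)$ by Lemma~\ref{lemma:lower}, equality holds, and no bounded policy can do strictly better than BU. This concludes the proof.

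There is no real obstacle, since the substantive work has already been done in Lemmas~\ref{lemma:lower}--\ref{lemma:bounded} and Theorem~\ref{lemma:AoI}; the only subtlety is the (essentially bookkeeping) check that BU is bounded in the sense of Definition~1 so that the lower bound machinery applies to it. If one were uneasy about that check, an alternative is to bypass Lemma~\ref{lemma:lower} on BU directly and instead argue by contradiction: if some bounded policy $\pi^\star$ beat BU, then $\pi^\star$ would beat the limit of BU-ER$_{T_0}$, contradicting that $(2-p)/(2p)$ lower-bounds every bounded policy including $\pi^\star$.
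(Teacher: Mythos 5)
Your proposal is correct and matches the paper's argument exactly: the paper proves Theorem~\ref{thm:myopic} by the same one-line chaining of Lemma~\ref{lemma:lower} (lower bound for all bounded policies), Lemma~\ref{lemma:suboptimal} (BU dominates BU-ER$_{T_0}$), and Theorem~\ref{lemma:AoI} (BU-ER$_{T_0}$ approaches $\frac{2-p}{2p}$ as $T_0\to\infty$). Your extra check that BU itself lies in $\Pi'$ is a sensible piece of bookkeeping that the paper leaves implicit.
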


\section{Simulation Results} \label{sec:simulation}
In this section, we evaluate the performances of the proposed status updating policies through simulations.

First, we generate sample paths for the Poisson energy harvesting process with $\lambda=1$, and perform the BU updating. The time average AoI as a function of $T$ is shown in Fig. \ref{fig:simu-1}. We vary $p=0.2, 0.6, 1.0$, and plot both the sample average and the corresponding lower bound over $500$ sample paths in the figure. We observe that all curves gradually approach the lower bound $\frac{2-p}{2p}$ as $T$ increases.  
We note that when $T =5000$, there is only a very small difference between the simulation results and the analytical lower bound. The results indicate that the proposed BU status updating policy is optimal. We also note that the time average AoI is monotonically decreasing as $p$ increases, which is consistent with the form of the lower bound. This is also intuitive since channel with better quality (i.e., larger $p$) will render smaller time-average AoI.

Next, we compare the time average AoI under different updating policies, i.e., the BU updating policy, the BU-ER$_{T_0}$ updating policy and a  greedy updating policy. We set $p=0.6$ and $T_0=30$. In the greedy updating policy, the transmitter updates instantly when one unit of energy arrives. We plot the sample average of AoI over $500$ sample paths in Fig.~\ref{fig:simu-2}. As we observe, the BU updating policy achieves the minimum time average AoI among those three updating policies. We note that the BU-ER updating policy also achieves the lower bound asymptotically, which is consistent with our proof, and the greedy policy does not approach the lower bound.

\begin{figure}[t]
	\centering
	\includegraphics[width=3.35in]{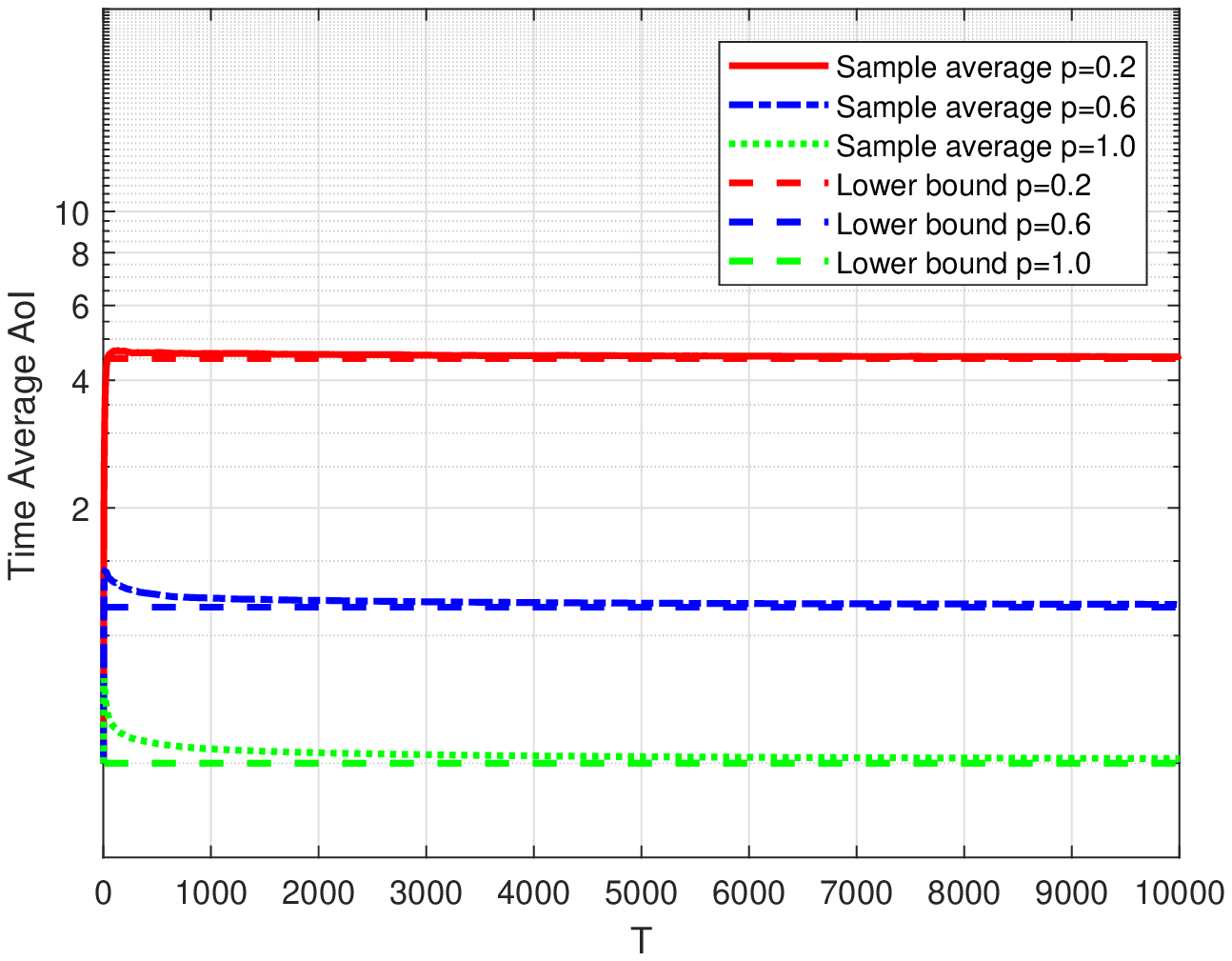}
	\caption{\mbox{Time average AoI with different $p$.}
		 }\label{fig:simu-1}
		 \vspace{-0.2in}
\end{figure}

\begin{figure}[t]
	\centering
	\includegraphics[width=3.35in]{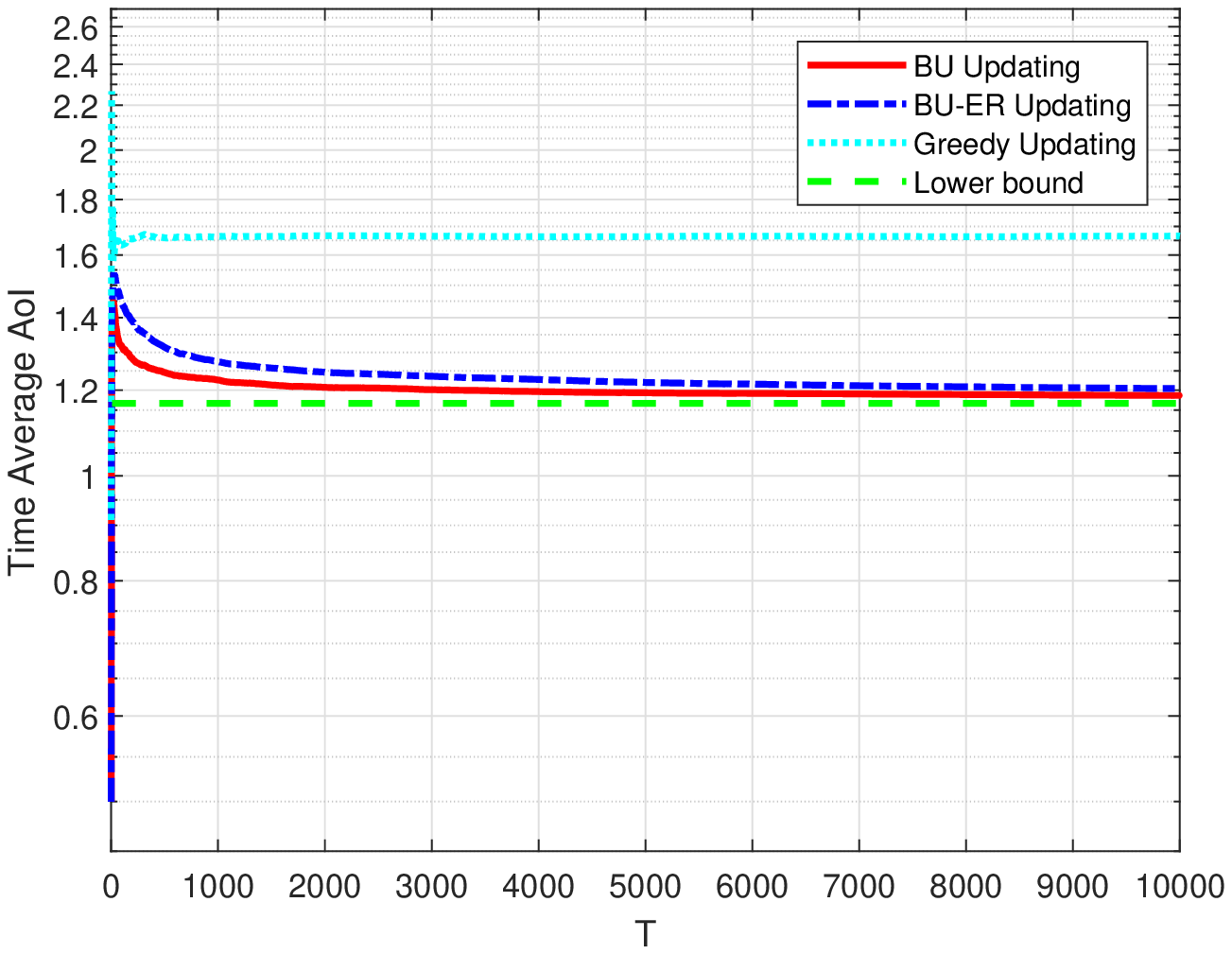}
	\caption{\mbox{Performance comparison.}
	}\label{fig:simu-2}
	 \vspace{-0.2in}
\end{figure}

\section{Conclusions} \label{sec:conclusion}
In this paper, we investigated the optimal status updating policy for an energy harvesting source with a noisy channel. We showed that among a broadly defined class of online policies, the BU updating policy minimizes the expected long-term average AoI. Its optimality is established by constructing a sequence of BU-ER updating policies which are sub-optimal to BU updating, and showing that its limit achieves the lower bound of the expected long-term average AoI.





\end{document}